\theoremstyle{plain}
\newtheorem{theorem}{Theorem}[section]
\newtheorem{lemma}{Lemma}[section]
\theoremstyle{remark}
\newtheorem{remark}{Remark}[section]
\numberwithin{equation}{section}
\DeclareMathOperator{\Tr}{Tr}
\DeclareMathOperator{\tr}{Tr}
\def\geqslant{\ge}
\def\leqslant{\le}
\def\bq{\begin{eqnarray}}
\def\eq{\end{eqnarray}}
\def\bqq{\begin{eqnarray*}}
\def\eqq{\end{eqnarray*}}
\def\wto{\rightharpoonup}
\newcommand{\norm}[1]{\left\lVert #1 \right\rVert}
\newcommand\1{{\ensuremath {\mathds 1} }}
\renewcommand{\epsilon}{\varepsilon}
\def\NN {\mathbb{N}}
\def\cS {\mathcal{S}}
\def\gS{\mathfrak{S}}
\def\S {\mathcal{S}}
\def\cS {\mathcal{S}}
\def\gH{\mathfrak{H}}
\newcommand\pscal[1]{{\ensuremath{\left\langle #1 \right\rangle}}}
\renewcommand{\leq}{\leqslant}
\renewcommand{\geq}{\geqslant}
\newcommand{\Gammat}{\tilde{\Gamma}}
\newcommand{\gHNs}{\gH ^{N} }
\newcommand{\gHks}{\gH ^{k} }
\newcommand{\muu}{\mu ^{\rm up}}
\newcommand{\mul}{\mu ^{\rm low}}
\title[Quantum de Finetti theorem for bosons]{Remarks on the quantum de Finetti theorem for bosonic systems}
\author[M. Lewin]{Mathieu LEWIN}
\address{CNRS \& Laboratoire de Math\'ematiques (UMR 8088), Universit\'e de Cergy-Pontoise, F-95000 Cergy-Pontoise, France.}
\email{mathieu.lewin@math.cnrs.fr}
\author[P.T. Nam]{Phan Th\`anh NAM}
\address{CNRS \& Laboratoire de Math\'ematiques (UMR 8088), Universit\'e de Cergy-Pontoise, F-95000 Cergy-Pontoise, France.}
\email{phan-thanh.nam@u-cergy.fr}
\author[N. Rougerie]{Nicolas ROUGERIE}
\address{Universit\'e Grenoble 1 \& CNRS,  LPMMC (UMR 5493), B.P. 166, F-38 042 Grenoble, France}
\email{nicolas.rougerie@grenoble.cnrs.fr}
\date{August 21, 2014}
\begin{document}

\begin{abstract}
The quantum de Finetti theorem asserts that the $k$-body density matrices of a $N$-body bosonic state approach a convex combination of Hartree states (pure tensor powers) when $N$ is large and $k$ fixed. In this note we review a construction due to Christandl, Mitchison, K\"onig and Renner \cite{ChrKonMitRen-07} valid for finite dimensional Hilbert spaces, which gives a quantitative version of the theorem. We first propose a variant of their proof that leads to a slightly improved estimate. Next we provide an alternative proof of an explicit formula due to Chiribella \cite{Chiribella-11}, which gives the density matrices of the constructed state as a function of those of the original state.       
\end{abstract}

\maketitle

\tableofcontents

\section{Introduction}\label{sec:intro}
 
Consider a system of $N$ bosons with one-particle state space $\gH$, a separable Hilbert space. Pure states of this system are rank-one projectors $|\Psi\rangle \langle \Psi|$ onto normalized vectors $\Psi \in \gHNs$, the symmetric tensor product of $N$ copies of $\gH$. Mixed states are then convex combinations of pure states, that is, self-adjoint positive trace-class operators with trace~$1$:
\begin{equation}\label{eq:states}
\cS (\gHNs):= \left\{ \Gamma \in \gS ^1 (\gHNs),\: \Gamma = \Gamma ^*, \: \Gamma \geq 0, \: \tr \Gamma = 1 \right\}. 
\end{equation}
It is a well-known fact that Hartree states, i.e. projections onto tensor powers $\Psi=u ^{\otimes N}\in \gH^N$ (for $u$ a normalized vector in $\gH$), play a very special role in the physics of bosonic systems. Indeed, since bosons, contrarily to fermions, do not satisfy the Pauli exclusion principle, there is a possibility for many particles to occupy the same quantum state, which is the meaning of the ansatz~$u ^{\otimes N}$. It is in fact the case for non interacting particles in the ground state of a one-body Hamiltonian, or in the thermal state at low enough temperature. This is the famous \emph{Bose-Einstein condensation} phenomenon. 

The case of interacting bosonic particles is much more subtle. It is no longer true that the bosonic ground state of an interacting Hamiltonian (say with two-body interactions) is given exactly by a Hartree state. In fact, one may expect that this is almost the case for large systems with properly scaled interactions, that is in the limit $N\to \infty$ with a $N$-dependent two-body coupling ensuring that the total energy of the system stays of order $N$. Proofs of this expectation for various models are available in the literature (see \cite{PetRagVer-89,RagWer-89,Werner-92,LieSeiSolYng-05,LewNamRou-13} and references therein), emphasizing the relevance of mean-field theories obtained from the ansatz $\Psi = u ^{\otimes N}$ to the study of large bosonic systems.  

One possible strategy to investigate these mean-field limits was employed in \cite{PetRagVer-89,RagWer-89,Werner-92,LewNamRou-13} and is based on a remarkable structure property of the set of bosonic states in the limit of infinitely many particles. Roughly speaking, \emph{any $N$-body bosonic state is almost a convex combination of Hartree states for large $N$}. Since the quantum mechanical energy is linear in the state\footnote{That is, in the density matrix $|\Psi\rangle\langle \Psi|$. It is of course quadratic in the wave-function $\Psi$.}, it is then immediate to guess that the infimum over all states should coincide with the infimum over Hartree states in the large $N$ limit. Hence the validity of the mean-field approximation follows using very little of the particular form of the Hamiltonian. Of course, to obtain a rigorous proof, one must justify the preceding heuristic statement in a sufficiently strong sense, and this is precisely the purpose of the \emph{quantum de Finetti theorem}. 

\medskip

Let us start with a sequence of (mixed) states $\Gamma_N \in \cS (\gHNs)$ and construct the corresponding reduced $k$-body density matrices
\begin{equation}\label{eq:reduced matrices}
\Gamma_N ^{(k)} := \tr_{k+1\to N} \left[ \Gamma_N \right] \in \cS (\gHks). 
\end{equation}
Here $\tr_{k+1\to N}$ is a notation for the partial trace with respect to all the variables but $k$. Modulo a diagonal extraction one can always assume that there is a subsequence (not relabeled) along which
\begin{equation}\label{eq:weak conv}
\Gamma_N ^{(k)} \wto_* \Gamma ^{(k)}
\end{equation}
weakly-$\ast$ in the trace-class, when $N\to \infty$.
If in addition the convergence in \eqref{eq:weak conv} is strong (that is, holds in trace-class norm), then we have the consistency relations 
\begin{equation}\label{eq:consistent}
\Gamma ^{(k)} = \tr_{k+1} \Gamma ^{(k+1)}
\end{equation}
for any $k=0,1,2,\ldots $. The hierarchy $(\Gamma ^{(k)})_{k\in \NN}$ can then be thought of as describing a system with infinitely many particles and the quantum de Finetti theorem of St\o{}rmer~\cite{Stormer-69} and Hudson-Moody \cite{HudMoo-75} states that there exists a unique Borel probability measure $\mu$ on the sphere $S\gH$ of the one-particle Hilbert space, invariant under the group action of $S^1$, such that 
\begin{equation}\label{eq:deF strong}
\boxed{\Gamma ^{(k)} = \int_{S\gH} |u ^{\otimes k}\rangle \langle u ^{\otimes k} | d\mu (u).}  
\end{equation}
The above result is a generalization of the famous classical de Finetti theorem of Hewitt and Savage \cite{DeFinetti-31,DeFinetti-37,HewSav-55}.
It is one way to give a rigorous meaning to the previous heuristic statement, already extremely useful in applications to mean-field systems: typically it is safe to assume that particles interact only pairwise and then the energy of a state depends only on the reduced $2$-body density matrix. 

A second step is to wonder whether one could obtain a quantitative version of the de Finetti theorem, much like Diaconis and Freedman obtained a quantitative version of the classical de Finetti theorem~\cite{DiaFre-80}. Namely, one may ask the following question: given a state $\Gamma_N \in \cS (\gHNs)$, does there exist another state  $\Gammat_N$ of the form
\[
 \Gammat_N = \int_{S\gH} |u ^{\otimes N}\rangle \langle u ^{\otimes N} | d\mu_N (u)
\]
with $\mu_N$ a probability measure, such that 
\begin{equation}\label{eq:quant deF}
 \boxed{ \tr_{\gHks} \left| \Gamma_N ^{(k)} - \Gammat_N ^{(k)} \right| \leq C(N,k)}
\end{equation}
with $C(N,k)\to 0$ when $N\to \infty$ and $k$ is fixed ? The answer to this question is still open for general Hilbert spaces, but important progress has been made in recent years in the case of finite dimensional spaces \cite{KonRen-05,FanVan-06,ChrKonMitRen-07,Chiribella-11,Harrow-13}. Motivated by applications in  quantum information theory, variants of the above question have also been investigated \cite{ChrKonMitRen-07,ChrTon-09,RenCir-09,Renner-07,BraHar-12} by either adding assumptions on the state or weakening the conclusion.

As concerns \eqref{eq:quant deF}, the best estimate available, proved by Christandl, König, Mitchison and Renner in \cite{ChrKonMitRen-07}, is $C(N,k)=4dk/N$ where\footnote{Note the convention in \cite{ChrKonMitRen-07} where the trace norm is divided by $2$.} $d$ is the dimension of $\gH$. It has the desired property that for fixed $d$ and $k$, $C(N,k) \to 0$ when $N\to \infty$. We emphasize that here we discuss only bosonic states, but in \cite{ChrKonMitRen-07} it is shown that one may generalize \eqref{eq:quant deF} to general symmetric states (boltzons), with $C(N,k) = 4kd ^2 /N$. A result of this form had appeared before in \cite{FanVan-06} for the case $k=2$.  

\medskip

In this note we do three things: first we review the construction of \cite{ChrKonMitRen-07} and discuss some related heuristics in Section \ref{sec:motiv}. We then propose a variant of their proof which gives a slightly better bound $\sim 2 kd /N$ when $N\gg dk$, see Theorem \ref{thm:CKMR}. Next we express the construction, which is in fact given by an anti-Wick quantization, in terms of creation/annihilation operators. This yields our Theorem \ref{thm:CKMR-identity}, an explicit expression of the density matrices of the state $\Gammat_N$ as a function of the density matrices of the original state $\Gamma_N$. A variant of the bound follows immediately, see Remark \ref{rm:non opt bound}. 

The explicit formula we obtain for the density matrices of the state $\Gammat_N$ was derived before by Chiribella in \cite{Chiribella-11} (see also \cite{Harrow-13}), where it is interpreted as a relation between ``universal measure and prepare quantum channels'' and ``optimal cloning maps''. Our proof is new, and proceeds by expressing the problem at hand in terms of the second quantization formalism and the CCR algebra. This method, along with the relationship between de Finetti measures and upper/lower symbols in a coherent state representation that we discuss in Section \ref{sec:motiv}, provides an alternative interpretation of the fundamental principles at work.

\medskip

We end this introduction by noting that, for finite dimensional spaces, the quantum de Finetti theorem in the form \eqref{eq:deF strong} can be obtained from the aforementioned bounds simply by passing to the limit $N\to \infty$ at fixed $d$. Then, as we proved in \cite{LewNamRou-13} (see Section 2 therein), a stronger theorem for general spaces can be obtained by combining the localization method of \cite{Lewin-11} and \eqref{eq:deF strong} in finite dimensional spaces. Namely, under the sole assumption \eqref{eq:weak conv}, the conclusion in \eqref{eq:deF strong} still holds provided the measure $\mu$ is allowed to live on the ball of the Hilbert space rather than on the sphere. This is connected to recent results of Ammari and Nier \cite{AmmNie-08,AmmNie-09,AmmNie-11}. It is then simple to deduce the usual statement when there is strong convergence. The method in \cite{LewNamRou-13} thus shows in particular that the de Finetti theorem for general spaces can be deduced from the version in finite 
dimensional spaces, which in turn follows from the quantitative bounds we discuss in the sequel. This has the advantage of providing a more constructive (but longer) proof of the result than that given in the original references \cite{Stormer-69,HudMoo-75} and in \cite{CavFucSch-02}. 

\medskip 
 
\textbf{Acknowledgments.} NR thanks Denis Basko, Thierry Champel and Markus Holzmann for a helpful discussion. We also benefited from interesting exchanges with Matthias Christandl, Isaac Kim and Jan Philip Solovej and financial support from the European Research Council (FP7/2007-2013 Grant Agreement MNIQS 258023) and the ANR (Mathostaq project, ANR-13-JS01-0005-01). NR and PTN acknowledge the hospitality of the Institute for Mathematical Science of the National University of Singapore.
 
\section{An explicit construction and the associated estimates}\label{sec:CKMR}

\subsection{Motivation and heuristics}\label{sec:motiv}

When the one-body Hilbert space $\gH$ is finite dimensional, one can base an explicit construction on Schur's lemma: by rotational invariance of the (normalized) Haar (uniform) measure $du$ on the unit sphere $S\gH$ we have
\begin{equation}\label{eq:Schur}
 \dim \gH^N \int_{S\gH} | u^{\otimes N} \rangle  \langle u^{\otimes N} | \, du = \1_{\gH^N} 
\end{equation}
where   
$$\dim \gH^N = {N+d-1 \choose {d-1}}\quad\text{and}\quad \dim\gH=d.$$ 
The decomposition \eqref{eq:Schur} is a coherent state representation \cite{KlaSka-85,ZhaFenGil-90}, and in this respect, one may rephrase the problem we are looking at as the quest for an \emph{upper symbol} associated to a given (mixed) state $\Gamma$ on $\gH^N$. That is we may look for a positive measure $d\muu$, absolutely continuous with respect to $du$, such that the identity
\begin{equation}\label{eq:up symb}
\Gamma  = \int_{S\gH} d\muu (u) | u^{\otimes N} \rangle  \langle u^{\otimes N} |, 
\end{equation}
or at least an approximation of it, holds. Of course there need not be such a \emph{positive} upper symbol\footnote{Strictly speaking the upper symbol designates the density of $d\muu$ with respect to $\dim \gH^N\: du$. We shall abuse language throughout this discussion.} for every state (but there is always one if $d\muu$ is allowed to be a signed measure~\cite{Simon-80}). This point of view is helpful nevertheless because there is always a (positive) \emph{lower symbol} (or Husimi function) associated to the state in the representation \eqref{eq:Schur}:
\begin{equation}\label{eq:low symb}
d\mul (u) :=\dim \gH^N \pscal{u^{\otimes N},\Gamma u^{\otimes N}} du. 
\end{equation}
Now, the limit $N\to \infty$ is reminiscent of a semi-classical limit, in which one may expect that upper and lower symbols will approximately coincide \cite{Lieb-73b,Simon-80}. A good guess is then to define the candidate de Finetti measure as being the lower symbol of $\Gamma$ in the representation~\eqref{eq:Schur}. 

We can even motivate this choice more explicitly: assume that the state in question does have a positive upper symbol to begin with, i.e. that \eqref{eq:up symb} holds with $\muu$ a positive measure. We are then looking for an approximate expression of the upper symbol in terms of the state itself. Let us compute the lower symbol\footnote{The ``lower symbol of the upper symbol''.} of~\eqref{eq:up symb}:
\begin{align*}
d\mul (u) &=\dim \gH^N \pscal{u^{\otimes N},\Gamma u^{\otimes N}} du \\
&= \dim \gH^N \left( \int_{S\gH} |\langle u, v \rangle| ^{2N} d\muu (v) \right) du.
\end{align*}
It is clear that for large $N$, the contribution to the above integral of vectors $v$ that are not exactly colinear to $u$ will be negligible, and therefore that the integral will simply converge to $d\muu (u)$, which means that
$$ d\mul \to d\muu $$
as measures when $N\to \infty$. If the upper symbol was not positive to begin with, this still shows that it can be approximated by a positive measure (the lower symbol) when $N$ is large. This should not be surprising because (as we used above) $\langle u ^{\otimes N}, v  ^{\otimes N}\rangle_{\gHNs} \to 0$ if $u$ and $v$ are not colinear and $N\to \infty$. The coherent state basis $(u ^{\otimes N})_{u\in \S \gH}$ thus tends to be ``less and less overcomplete'' in this limit, i.e. closer and closer to a true basis. Then the upper symbol should become positive simply by positivity of the state. 

All this suggests that the lower symbol should be a good approximation to the upper symbol. Of course the lower symbol also has the advantage of being always well-defined as an explicit, positive, function of the state. It is thus natural to take 
\begin{equation}\label{eq:Husimi deF}
\boxed{d\mu_N (u) :=\dim \gH^N \pscal{u^{\otimes N},\Gamma u^{\otimes N}} du} 
\end{equation}
as the candidate de Finetti measure of a given $N$-body state $\Gamma$, and this is the choice of \cite{ChrKonMitRen-07}. In the following we state an estimate of the form \eqref{eq:quant deF}, which confirms its sensibility. 

\subsection{A quantitative de Finetti theorem and an explicit formula}\label{sec:results}
We are interested in relating the density matrices of an $N$-body state $\Gamma_N$
$$ \gamma_N^{(k)}:=\Tr_{k+1\to N} \Gamma_N$$
to those of the state defined by taking \eqref{eq:Husimi deF} as an upper symbol in the representation \eqref{eq:Schur} 
\begin{equation}\label{eq:def representation}
\widetilde \gamma _N^{(k)} :=\int_{S\gH} d\mu_N(u) |u^{\otimes k} \rangle \langle u^{\otimes k}|,\quad\text{with}\quad d\mu_N (u) :=\dim \gH^N \pscal{u^{\otimes N},\Gamma_N u^{\otimes N}}du. 
\end{equation}
The main estimate is the following.

\begin{theorem}[\textbf{Quantitative quantum de Finetti in finite dimension}]\label{thm:CKMR} \mbox{}\\
Let $\gH$ be a Hilbert space of dimension $d$. For every state $\Gamma_N$ on $\gH^N$, let $\widetilde \gamma _N^{(k)}$ be defined as in \eqref{eq:def representation}. Then for every $k=1,2,...,N$ we have
\begin{equation} \label{eq:error-CKMR-improved}
\boxed{\Tr_{\gH^k} \Big| \gamma_N ^{(k)} - \widetilde \gamma _N^{(k)}  \Big| \le 
\begin{cases}
2 &\text{if $N\leq 2kd$},\\
\displaystyle\frac{2kd}{N-kd}&\text{if $N>2kd$}.\\
\end{cases}}
\end{equation}
\end{theorem}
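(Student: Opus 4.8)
The plan is to replace the integral definition of $\widetilde\gamma_N^{(k)}$ by an exact algebraic ``lifting'' formula and then reduce the whole estimate to a single operator inequality between $\widetilde\gamma_N^{(k)}$ and $\gamma_N^{(k)}$. I first dispose of the regime $N\le 2kd$. Both $\gamma_N^{(k)}$ and $\widetilde\gamma_N^{(k)}$ are states: the latter is manifestly a positive operator, and by the resolution of the identity \eqref{eq:Schur} the measure $d\mu_N$ has total mass one, so $\Tr\widetilde\gamma_N^{(k)}=1$. Hence $\Tr_{\gH^k}|\gamma_N^{(k)}-\widetilde\gamma_N^{(k)}|\le\Tr\gamma_N^{(k)}+\Tr\widetilde\gamma_N^{(k)}=2$, which is the first line of \eqref{eq:error-CKMR-improved}. (By linearity of $\Gamma_N\mapsto\gamma_N^{(k)}-\widetilde\gamma_N^{(k)}$ and convexity of the trace norm one may moreover assume $\Gamma_N$ pure, which is convenient but not essential below.)

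Next comes the lift formula. Since $u^{\otimes N}\otimes u^{\otimes k}=u^{\otimes(N+k)}$, applying \eqref{eq:Schur} in $\gH^{N+k}$ gives, as operators on $\gH^{N}\otimes\gH^{k}$,
\[
\int_{S\gH}|u^{\otimes N}\rangle\langle u^{\otimes N}|\otimes|u^{\otimes k}\rangle\langle u^{\otimes k}|\,du=\frac{1}{\dim\gH^{N+k}}\,P_{N+k},
\]
where $P_{N+k}$ denotes the orthogonal projection onto the symmetric subspace $\gH^{N+k}\subset\gH^{N}\otimes\gH^{k}$. Inserting $d\mu_N(u)=\dim\gH^N\,\Tr_{\gH^N}[\Gamma_N|u^{\otimes N}\rangle\langle u^{\otimes N}|]\,du$ into \eqref{eq:def representation} and taking a partial trace, I obtain the exact identity
\[
\widetilde\gamma_N^{(k)}=\frac{\dim\gH^N}{\dim\gH^{N+k}}\,\Tr_{\gH^N}\!\big[(\Gamma_N\otimes\1_{\gH^k})\,P_{N+k}\big].
\]
This is the algebraic content underlying Theorem \ref{thm:CKMR-identity}; here I only use it to extract the leading term.

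I would then expand $P_{N+k}$, restricted to $\gH^{N}\otimes\gH^{k}$, as the average $\binom{N+k}{k}^{-1}\sum_I U_I$ over the $\binom{N+k}{k}$ cosets of $S_N\times S_k$ in $S_{N+k}$, the coset representative $U_I$ being indexed by the choice $I$ of which slots carry the $k$ adjoined particles. The $\binom{N}{k}$ terms with $I$ contained in the $\Gamma_N$-block each reproduce $\gamma_N^{(k)}$ (by permutation symmetry of $\Gamma_N$), so that $\widetilde\gamma_N^{(k)}=c\,\gamma_N^{(k)}+R$, with $R$ collecting the cross terms and
\[
c=\frac{\dim\gH^N}{\dim\gH^{N+k}}\frac{\binom{N}{k}}{\binom{N+k}{k}}=\prod_{j=0}^{k-1}\frac{N-j}{N+d+j}.
\]
Because $\gamma_N^{(k)}$ and $\widetilde\gamma_N^{(k)}$ have the same trace, it then suffices to establish the operator inequality $\widetilde\gamma_N^{(k)}\ge\frac{N-2kd}{N-kd}\,\gamma_N^{(k)}$ for $N>2kd$: writing $A:=\widetilde\gamma_N^{(k)}-\frac{N-2kd}{N-kd}\gamma_N^{(k)}\ge0$ one has $\Tr A=\frac{kd}{N-kd}$ and $\gamma_N^{(k)}-\widetilde\gamma_N^{(k)}=\frac{kd}{N-kd}\gamma_N^{(k)}-A$, so the triangle inequality gives $\Tr_{\gH^k}|\gamma_N^{(k)}-\widetilde\gamma_N^{(k)}|\le 2\tfrac{kd}{N-kd}=\frac{2kd}{N-kd}$.

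The main obstacle is precisely this operator inequality. The naive estimate retains only the diagonal coefficient $c$ and yields $2(1-c)$, which is \emph{weaker} than the target as soon as $k\ge2$, since the product defining $c$ carries a spurious $k^2/N$ term; the gain must therefore come from showing that the cross-term remainder $R$ itself dominates a definite multiple of $\gamma_N^{(k)}$, i.e. $R\ge(\tfrac{N-2kd}{N-kd}-c)\,\gamma_N^{(k)}$. Quantifying these cross terms — computing the relevant partial traces of the $U_I$ and resumming them with the correct dimensional weights — is the combinatorial crux of the argument and the source of the improvement over the constant $4dk/N$ of \cite{ChrKonMitRen-07}.
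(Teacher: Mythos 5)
Your setup is correct as far as it goes: the trivial bound for $N\le 2kd$, the exact lifting identity
\begin{equation*}
\widetilde\gamma_N^{(k)}=\frac{\dim\gH^N}{\dim\gH^{N+k}}\,\Tr_{\gH^N}\big[(\Gamma_N\otimes\1_{\gH^k})\,P_{N+k}\big],
\end{equation*}
the coset expansion of $P_{N+k}$, and the value $c=\prod_{j=0}^{k-1}\frac{N-j}{N+d+j}$ of the diagonal coefficient are all right (this is in substance the ``lifting up'' route of \cite{ChrKonMitRen-07} combined with the expansion of Theorem \ref{thm:CKMR-identity}). But the proof has a genuine gap, and you name it yourself: everything hinges on the operator inequality $\widetilde\gamma_N^{(k)}\ge\frac{N-2kd}{N-kd}\gamma_N^{(k)}$, equivalently $R\ge\big(\frac{N-2kd}{N-kd}-c\big)\gamma_N^{(k)}$, which you assert is ``the combinatorial crux'' without proving it. This is not a routine computation left to the reader: since $1-c\approx k(d+k-1)/N$ while $\frac{kd}{N-kd}\approx kd/N$, the required constant $\frac{N-2kd}{N-kd}-c$ is strictly positive of order $k(k-1)/N$ for $k\ge2$, so you need a quantitative operator lower bound on the cross terms $\gamma_N^{(\ell)}\otimes_s\1_{\gH^{k-\ell}}$, $\ell<k$, in terms of $\gamma_N^{(k)}$. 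Testing against Hartree vectors $v^{\otimes k}$ makes such a bound plausible, but Hartree expectations determine an operator (Lemma \ref{le:uk-g-uk=0}), they do not determine its sign, so nothing in your argument establishes the inequality on all of $\gH^k$ — and it is not even clear it holds with that constant. Discarding $R\ge0$ entirely, your scheme only yields $2(1-c)=2(1-C(d,k,N))\le 2k(d+2k)/N$, which is exactly the bound of Remark \ref{rm:non opt bound} and is strictly weaker in its $k$-dependence than \eqref{eq:error-CKMR-improved}. So as written you have proved the remark, not the theorem.

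For contrast, the paper's proof avoids any such structural inequality by a ``lifting down'' duality argument: for a pure state it tests $\gamma_N^{(k)}-\widetilde\gamma_N^{(k)}$ against an arbitrary bounded $A$ on $\gH^k$, uses the factorization $P_u^{\otimes N}(A\otimes\1_{\gH^{N-k}})(\1_{\gH^N}-P_u^{\otimes N})=P_u^{\otimes k}A(\1_{\gH^k}-P_u^{\otimes k})\otimes P_u^{\otimes(N-k)}$, the elementary bound $(\1-P_u^{\otimes k})A(\1-P_u^{\otimes k})\le\|A\|(\1-P_u^{\otimes k})$, and Schur's formula \eqref{eq:Schur} in both $\gH^N$ and $\gH^{N-k}$ to evaluate $\int_{S\gH}(\1_{\gH^k}-P_u^{\otimes k})\otimes P_u^{\otimes(N-k)}\,du$ in closed form. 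This gives
\begin{equation*}
\Tr\Big|\gamma_N^{(k)}-\widetilde\gamma_N^{(k)}\Big|\le 2\Big(\frac{\dim\gH^N}{\dim\gH^{N-k}}-1\Big)\le\frac{2dk}{N-dk}
\end{equation*}
by Bernoulli's inequality, with no need to control cross terms at all. If you wish to salvage your route, you must either prove the stated lower bound on $R$ (the direction pursued by Chiribella \cite{Chiribella-11}, who obtains $2kd/N$ from the explicit formula), or switch to the duality argument above.
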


\begin{remark}\label{rm:Stormer}
As claimed before, such a statement implies the St\o{}rmer-Hudson-Moody theorem recalled in~\eqref{eq:deF strong}, for finite dimensional spaces. Indeed, starting from~\eqref{eq:reduced matrices} one can always extract convergent subsequences from reduced density matrices as in~\eqref{eq:weak conv}, and the convergence is strong since $\gH ^k$ is finite dimensional. The measure $\mu_N$ on the other hand is a probability over a compact set, so along a further subsequence $\mu_N \to \mu$, a probability, weakly as measures. Then, for any $V_k\in \gH ^k$
$$ \langle V_k , \widetilde \gamma _N^{(k)} V_k \rangle = \int_{S\gH} \left| \langle u ^{\otimes k}, V_k \rangle \right| ^2 d\mu_N (u) \to \int_{S\gH} \left| \langle u ^{\otimes k}, V_k \rangle \right| ^2 d\mu (u)$$
since $u\mapsto \left| \langle u ^{\otimes k}, V_k \rangle \right| ^2$ is clearly a continuous function of $u$. In finite dimension this implies strong convergence of the reduced density matrices of $\Gammat_N$:
$$ \widetilde \gamma _N^{(k)} \to \int_{S \gH} |u ^{\otimes k} \rangle \langle u ^{\otimes k} |d\mu(u)$$
and thus we deduce~\eqref{eq:deF strong} from the estimate~\eqref{eq:error-CKMR}. \hfill\qed
\end{remark}

Recall that $\tr \gamma_N ^{(k)} = \tr \widetilde \gamma _N^{(k)} =1$, and therefore the bound
$$\Tr_{\gH^k} \Big| \gamma_N ^{(k)} - \widetilde \gamma _N^{(k)}  \Big| \le 2$$
is an obvious consequence of the triangle inequality for the trace-norm. On the other hand, we have
$$ \frac{\delta}{1-\delta}\leq 2\delta, \quad \forall \: 0\leq \delta\leq \frac12,$$
and this yields the simpler bound
\begin{equation} \label{eq:error-CKMR}
\Tr_{\gH^k} \Big| \gamma_N ^{(k)} - \widetilde \gamma _N^{(k)}  \Big| \le 
\frac{4kd}{N}.
\end{equation}
The slightly weaker inequality~\eqref{eq:error-CKMR} is the one that was proved by Christandl, König, Mitchison and Renner in \cite{ChrKonMitRen-07} (with a different convention for the trace norm, which is divided by~$2$). 
In Section \ref{sec:proof CKMR}, we provide a proof of the estimate~\eqref{eq:error-CKMR-improved}, which is very similar but not identical to that of \eqref{eq:error-CKMR} in~\cite{ChrKonMitRen-07}. While the proof in \cite{ChrKonMitRen-07} uses a ``lifting up" argument (namely going from the $N$ particle space to the $(N+k)$ particle space), we rather use a ``lifting down" argument. Related arguments and estimates may also be found in \cite{Chiribella-11}, where the bound is improved to $2kd/N$.

\begin{remark}\label{rm:infinite dimension}
It is clear from the discussion is Subsection~\ref{sec:motiv} that the construction used to prove Theorem~\ref{thm:CKMR} works only for finite dimensional spaces. It is an open problem to obtain an estimate with a better $d$-dependence, in particular one that would apply to infinite dimensional systems. 

As we mentioned at the end of Section~\ref{sec:intro}, the construction used here is useful, when combined with localization methods, to provide a constructive proof of the infinite dimensional quantum de Finetti theorem of St\o{}rmer-Hudson-Moody mentioned in the Introduction, see~\cite[Section 2]{LewNamRou-13}. This proof, based on the finite dimensional constructions discussed here, has applications to infinite dimensional settings~\cite{LewNamRou-14,LewNamRou-14b}.\hfill\qed
\end{remark}

\medskip
%

We next state the explicit formula relating $\tilde\gamma^{(k)}_N$ to $(\gamma^{(\ell)}_N)_{\ell= 0 \ldots k}$. It was first derived by Chiribella \cite{Chiribella-11} (along with the associated Remark \ref{rm:non opt bound}) and we will provide a different proof.

\begin{theorem}[\textbf{Explicit formula for $\tilde\gamma^{(k)}_N$}] \label{thm:CKMR-identity} \mbox{}\\
For the density matrices $\gamma_N^k$ and $\widetilde \gamma _N^{(k)}$ given above, we have
\begin{equation}\label{eq:CKMR exact}
\widetilde \gamma _N^{(k)} = {{N+k+d-1}\choose k}^{-1}\sum_{\ell=0}^{k} {N \choose \ell}  \gamma_N^{(\ell)} \otimes _s \1_{\gH^{k-\ell}}
\end{equation}
with the convention that
$$ \gamma_N^{(\ell)} \otimes _s \1_{\gH^{k-\ell}}= \frac{1}{\ell!\,(k-\ell)!}\sum_{\sigma\in S_k} (\gamma_N^\ell)_{\sigma(1),...,\sigma(\ell)} \otimes (\1_{\gH^{k-\ell}})_{{\sigma(\ell+1)},...,{\sigma(k)}}.$$
\end{theorem}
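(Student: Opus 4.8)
The plan is to reduce the operator identity \eqref{eq:CKMR exact} to a purely algebraic statement about the bosonic symmetrizer, and then to evaluate the resulting permutation sum by organizing it through the second-quantized (Wick/anti-Wick) formalism.

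First I would rewrite the definition \eqref{eq:def representation} of $\widetilde\gamma_N^{(k)}$ as a compression of $\Gamma_N$. Since $d\mu_N(u)=\dim\gH^N\,\langle u^{\otimes N},\Gamma_N u^{\otimes N}\rangle\,du$ and $u^{\otimes(N+k)}=u^{\otimes N}\otimes u^{\otimes k}$ under the inclusion $\gH^{N+k}\subset\gH^{\otimes N}\otimes\gH^{\otimes k}$, testing against an arbitrary operator $A$ on $\gH^k$ gives
\[
\Tr_{\gH^k}\big[\widetilde\gamma_N^{(k)}A\big]=\dim\gH^N\int_{S\gH}\big\langle u^{\otimes(N+k)},(\Gamma_N\otimes A)\,u^{\otimes(N+k)}\big\rangle\,du .
\]
Applying \eqref{eq:Schur} at rank $N+k$ collapses the integral, so that, writing $P_{N+k}$ for the orthogonal projection of $\gH^{\otimes(N+k)}$ onto the symmetric subspace $\gH^{N+k}$,
\[
\Tr_{\gH^k}\big[\widetilde\gamma_N^{(k)}A\big]=\frac{\dim\gH^N}{\dim\gH^{N+k}}\,\Tr\big[(\Gamma_N\otimes A)\,P_{N+k}\big].
\]
The dimension ratio is elementary: with $\dim\gH^m={m+d-1\choose d-1}$ it equals ${N+k+d-1\choose k}^{-1}{N+k\choose k}$, which already produces the prefactor of \eqref{eq:CKMR exact}. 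It then remains to prove the algebraic identity
\[
{N+k\choose k}\Tr\big[(\Gamma_N\otimes A)P_{N+k}\big]=\sum_{\ell=0}^{k}{N\choose\ell}\Tr_{\gH^k}\big[(\gamma_N^{(\ell)}\otimes_s\1_{\gH^{k-\ell}})A\big].
\]

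The heart of the argument is to evaluate the left-hand side by expanding $P_{N+k}=\tfrac{1}{(N+k)!}\sum_{\sigma}U_\sigma$ over the permutation operators $U_\sigma$ of the $N+k$ factors, and classifying the terms by the number $\ell$ of factors of the $A$-block that $\sigma$ sends into the $\Gamma_N$-block. This is precisely where second quantization is convenient: expressing $\Gamma_N$ and $A$ through creation and annihilation operators recasts each permutation contribution as a Wick-type contraction, in which $\ell$ ``shared'' particles are contracted between the two blocks while the remaining ones are partially traced. Because $\Gamma_N$ and $A$ are symmetric within their blocks, each such contribution depends only on $\ell$: tracing out the $N-\ell$ uncontracted factors of $\Gamma_N$ yields exactly $\gamma_N^{(\ell)}$, and tracing out the $k-\ell$ uncontracted factors of $A$ yields $\gamma_A^{(\ell)}:=\Tr_{\ell+1\to k}A$, so that $\Tr[(\Gamma_N\otimes A)U_\sigma]=\Tr_{\gH^\ell}[\gamma_N^{(\ell)}\gamma_A^{(\ell)}]$ for every $\sigma$ with crossing number $\ell$.

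Finally I would assemble the combinatorial weights. The number of $\sigma\in S_{N+k}$ with exactly $\ell$ cross-block factors is ${N\choose\ell}{k\choose\ell}\,N!\,k!$; dividing by $(N+k)!$ and multiplying by ${N+k\choose k}$ leaves the clean coefficient ${N\choose\ell}{k\choose\ell}$, with the Vandermonde identity $\sum_\ell{N\choose\ell}{k\choose\ell}={N+k\choose k}$ serving as a normalization check. Matching this against the definition of $\gamma_N^{(\ell)}\otimes_s\1_{\gH^{k-\ell}}$, which by the same symmetry satisfies $\Tr_{\gH^k}[(\gamma_N^{(\ell)}\otimes_s\1_{\gH^{k-\ell}})A]={k\choose\ell}\Tr_{\gH^\ell}[\gamma_N^{(\ell)}\gamma_A^{(\ell)}]$, closes the identity and hence, since $A$ was arbitrary, yields \eqref{eq:CKMR exact}. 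I expect the main obstacle to be exactly this bookkeeping — correctly counting the cross-block permutations and reconciling the binomial and factorial normalizations with the symmetrized tensor product $\otimes_s$ — rather than any conceptual difficulty; the virtue of the second-quantized reformulation is precisely that it keeps this combinatorics systematic and transparent.
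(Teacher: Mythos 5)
Your proposal is correct, and it reaches \eqref{eq:CKMR exact} by a genuinely different route than the paper. The two arguments share the key first move---lifting to the $(N+k)$-particle space and applying Schur's formula \eqref{eq:Schur} there; in the paper this is the step converting $\langle v^{\otimes k},\widetilde\gamma_N^{(k)}v^{\otimes k}\rangle$ into the anti-normally ordered expectation $\frac{(N+d-1)!}{(N+k+d-1)!}\langle\Psi_N,a(v)^k a^*(v)^k\Psi_N\rangle$---but they diverge from there. The paper tests only against Hartree vectors $v^{\otimes k}$, using Lemma \ref{le:uk-g-uk=0} (Hartree expectations determine a state on $\gH^k$) to reduce everything to a single mode, and then performs all the combinatorics through the one-mode CCR reordering identity of Lemma \ref{le:Wick A Wick}, $a(v)^n a^*(v)^n=\sum_{k=0}^n\binom{n}{k}\frac{n!}{k!}a^*(v)^k a(v)^k$, proved by a Laguerre-polynomial induction. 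You instead test against an arbitrary observable $A$ on $\gH^k$, collapse the coherent-state integral to $\frac{\dim\gH^N}{\dim\gH^{N+k}}\Tr\big[(\Gamma_N\otimes A)P_{N+k}\big]$, and evaluate the symmetrizer via the double-coset decomposition of $S_{N+k}$ with respect to $S_N\times S_k$: the contribution of $\sigma$ depends only on the crossing number $\ell$ precisely because $U_\tau\Gamma_N=\Gamma_N$ and $U_\rho A=A$ for block permutations---this is where you tacitly need $A$ to act on the symmetric subspace, which you should state explicitly since \eqref{eq:CKMR exact} is an identity on $\gH^k$, not on the full tensor product. I verified your normalizations: the dimension ratio does equal $\binom{N+k+d-1}{k}^{-1}\binom{N+k}{k}$, the coset count $\binom{N}{\ell}\binom{k}{\ell}N!\,k!$ is correct (Vandermonde recovers $(N+k)!$), the partial-swap evaluation $\Tr[(\Gamma_N\otimes A)U_\sigma]=\Tr_{\gH^\ell}[\gamma_N^{(\ell)}\gamma_A^{(\ell)}]$ holds for a representative and propagates by invariance, and $\Tr[(\gamma_N^{(\ell)}\otimes_s\1_{\gH^{k-\ell}})A]=\binom{k}{\ell}\Tr[\gamma_N^{(\ell)}\gamma_A^{(\ell)}]$ matches the paper's convention for $\otimes_s$. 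What your route buys: it bypasses both of the paper's lemmas---no polarization argument, no normal-ordering induction---and exhibits the coefficients $\binom{N}{\ell}\binom{k}{\ell}$ directly as double-coset sizes; in spirit it is closer to the computations of \cite{Chiribella-11,Harrow-13} than to this paper's. What the paper's route buys: the reduction to a single mode eliminates all multi-index bookkeeping and isolates the conceptual point of Section \ref{sec:Wick}, that $\gamma_N^{(k)}$ and $\widetilde\gamma_N^{(k)}$ are Wick and anti-Wick quantizations of the same data. One stylistic caveat: your framing via second quantization and ``Wick-type contractions'' is ornamental in your own argument---the work is done entirely by the permutation count, and you never actually need creation and annihilation operators, whereas the paper genuinely does.
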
 

\begin{remark}\label{rm:non opt bound}

A bound of the form \eqref{eq:quant deF} can also be deduced easily from the formula~\eqref{eq:CKMR exact}. Indeed, from the representation in Theorem \ref{thm:CKMR-identity} we may write 
\begin{equation}
\widetilde \gamma _N^{(k)} - \gamma_N ^{(k)}  = ( C(d,k,N) - 1) \gamma_N ^{(k)} + B = -A + B \label{eq:estim CKMR}
\end{equation}
where 
\[
C(d,k,N) = \frac{(N+d-1)!}{(N+k+d-1)!} \frac{N!}{(N-k)!} < 1,  
\]
and $A,B$ are non-negative operators. Since from \eqref{eq:estim CKMR} it is clear that $\Tr(-A+B)=0$, the triangle inequality gives
\[
\Tr \Big|\widetilde \gamma _N^{(k)} - \gamma_N ^{(k)} \Big| \leq \tr A + \tr B = 2 \Tr A = 2 (1- C(d,k,N)).
\]
By the elementary inequality 
\begin{align*}
C(d,k,N) &=  \prod_{j=0} ^{k-1} \frac{N-j}{N+j+d}\ge  \left( 1 - \frac{2k + d -2}{N + d + k - 1}\right) ^k \geq 1 -k \frac{2k + d -2}{N + d + k - 1}
\end{align*}
we find that  
\begin{equation} \label{eq:error-NLR}
\Tr \Big| \gamma_N ^{(k)} - \widetilde \gamma _N^{(k)} \Big| \le \frac{2 k(d+2k)}{N}.
\end{equation}
The $k$-dependence in \eqref{eq:error-NLR} is not as good as that of \eqref{eq:error-CKMR-improved}, but at least for fixed $k$ and $d\ll N$ we recover the same dependence on $d/N$. Recall that fixed $k$ and large $N$ is the relevant limit for studying mean-field approximations of many-body systems. Only $k=2$ is needed for systems comprising two-body interactions.\hfill \qed

\end{remark}

The main idea behind our proof of Theorem \ref{thm:CKMR-identity} is that the density matrices of $\widetilde \gamma _N^{(k)}$ turn out to be defined via an anti-Wick (anti-normal order of creation and annihilation operators) quantization, whereas the original density matrices $\gamma _N^{(k)}$ are of course defined by a Wick (normal order) quantization. Once this has been observed, the proof of \eqref{eq:CKMR exact}, given in Section \ref{sec:Wick}, consists in using the Canonical Commutation Relation repeatedly, with the upshot that, since there are many particles but few available degrees of freedom ($d\ll N$), annihilation and creation almost commute: their commutators are of order $1$ whereas the operators themselves should roughly be of order $\sqrt{N/d}$. The connection between quantum de Finetti theorems for bosonic states and the Wick versus anti-Wick quantization issue was inspired to us by the approach of Ammari and Nier \cite{Ammari-HDR,AmmNie-08,AmmNie-09,AmmNie-11}. We also remark that, 
independently of our work, Lieb and Solovej \cite{LieSol-13} use a formula very similar to \eqref{eq:CKMR exact} in their investigation of the classical entropy of quantum states. 

\medskip

\section{Proof of the main estimate, Theorem \ref{thm:CKMR}}\label{sec:proof CKMR}

In this section we give the proof of the bound~\eqref{eq:error-CKMR-improved}, following ideas from~\cite{ChrKonMitRen-07}. For simplicity of writing, we only deal with pure states $\Gamma_N=|\Psi_N\rangle\langle\Psi_N|$, as it is clear that the general case follows from the triangle inequality. 

Denote $P_u:=|u \rangle \langle u|$ for every $u\in S\gH$. Note that $P_u^{\otimes k}=|u^{\otimes k} \rangle \langle u^{\otimes k}|$ for every $k\in \mathbb N$. Thus for every bounded  operator $A$ on $\gH^k$, using Schur's formula (\ref{eq:Schur}) we find that
$$ \Tr [A \gamma_N^{(k)}]= \langle \Psi_N, (A \otimes \1_{\gH ^{N-k}}) \Psi_N\rangle = \dim \gH^N \int_{S\gH}\langle \Psi_N, P_u ^{\otimes N} (A \otimes \1_{\gH ^{N-k}}) \Psi_N\rangle du .$$   
On the other hand, by the definition of $\widetilde\gamma_N^{(k)}$,  
\begin{align*} \Tr [A \widetilde\gamma_N^{(k)}] &=  \dim \gH^N \int_{S\gH} \langle Au^{\otimes k}, u^{\otimes k} \rangle . |\langle \Psi_N, u ^{\otimes N} \rangle|^2 du \\
&=\dim \gH^N \int_{S\gH} \big\langle (A\otimes \1_{\gH ^{N-k}}) u^{\otimes N}, u^{\otimes N} \big\rangle . \big|\langle \Psi_N, u ^{\otimes N} \rangle\big|^2 du \\
&=\dim \gH^N \int_{S\gH} \big\langle \Psi_N, P_u^{\otimes N} (A\otimes \1_{\gH ^{N-k}}) P_u^{\otimes N} \Psi_N \big\rangle du .
\end{align*}   
Thus
\begin{align} \label{eq:diff-gammaN}
\Tr [A (\gamma_N^{(k)}-\widetilde\gamma_N^{(k)})] =
\dim \gH^N \int_{S\gH} \big\langle \Psi_N, P_u^{\otimes N} (A \otimes \1_{\gH ^{N-k}}) (\1_{\gH ^{N}}-P_u^{\otimes N}) \Psi_N \big\rangle du .
\end{align}   
Using 
\begin{equation*} 
P_u^{\otimes N} (A \otimes \1_{\gH ^{N-k}}) (\1_{\gH ^{N}}-P_u^{\otimes N}) = P_u^{\otimes k} A(\1_{\gH ^{k}}-P_u^{\otimes k})  \otimes P_u^{\otimes N-k}  
\end{equation*}
we find that
\begin{multline*}
 \int_{S \gH } P_u^{\otimes N} (A \otimes \1_{\gH ^{N-k}}) (\1_{\gH ^{N}}-P_u^{\otimes N}) du 
 \\ = A\otimes \1_{\gH ^{N-k}} \int_{S\gH} \left( \1_{\gH ^{k}}- P_u ^{\otimes k}\right) \otimes P_u ^{\otimes N-k}du 
  - \int_{S\gH} \left( \1_{\gH ^{k}}- P_u ^{\otimes k}\right) A \left( \1_{\gH ^{k}}- P_u ^{\otimes k}\right) \otimes P_u ^{\otimes N-k} du.
\end{multline*}
Then, using Schur's formula \eqref{eq:Schur} in $\gH ^N$ and $\gH ^{N-k}$ we have
\begin{equation}\label{eq:rewrite dif}
\int_{S\gH} \left( \1_{\gH ^{k}}- P_u ^{\otimes k}\right) \otimes P_u ^{\otimes N-k} du = \left( \left(\dim \gH ^{N-k} \right) ^{-1} - \left(\dim \gH ^N \right) ^{-1}\right)\1_{\gH ^{N}}
\end{equation}
and since 
\[
 \left( \1_{\gH ^{k}}- P_u ^{\otimes k}\right) A \left( \1_{\gH ^{k}}- P_u ^{\otimes k}\right) \leq \Vert A \Vert \left( \1_{\gH ^{k}}- P_u ^{\otimes k}\right)
\]
we conclude
from \eqref{eq:diff-gammaN} and \eqref{eq:rewrite dif} that
\begin{align*}
\Big| \Tr [A (\gamma_N^{(k)}-\widetilde\gamma_N^{(k)})] \Big| \le 2 \|A\| \Big(  \frac{\dim \gH^N}{\dim \gH^{N-k}} -1\Big)
\end{align*}
for every bounded operator $A$ on $\gH^k$. This implies the upper bound   
\begin{align} \label{eq:trace-bound-cN}
\Tr \Big|\gamma_N^{(k)}-\widetilde\gamma_N^{(k)} \Big| \le 2 \Big( \frac{\dim \gH^N}{\dim \gH^{N-k}} -1\Big).
\end{align}
Finally, due to Bernoulli's inequality we have 
\begin{align*}
\frac{\dim \gH^{N-k}}{\dim \gH^{N}} &= \frac{{{N+d-k-1} \choose {d-1}}}{{ {N+d-1} \choose {d-1}}}= \frac{(N-k+1)...(N-k+d-1)}{(N+1)...(N+d-1)} \\
& = \Big( 1- \frac{k}{N+1} \Big) ... \Big( 1- \frac{k}{N+d-1}\Big) \ge \Big(1- \frac{k}{N}\Big)^d \ge 1- \frac{dk}{N},
\end{align*}
which implies that, in the case $dk < N$
\begin{align*}
\frac{\dim \gH^N}{\dim \gH^{N-k}} -1 \le \Big( 1- \frac{dk}{N} \Big)^{-1}-1 = \frac{dk}{N-dk}.
\end{align*}
The desired estimate (\ref{eq:error-CKMR-improved}) then follows immediately from (\ref{eq:trace-bound-cN}).\qed

\section{Proof of the explicit formula, Theorem~\ref{thm:CKMR-identity}}\label{sec:Wick}

Our proof of Theorem \ref{thm:CKMR-identity} is based on the fact that $\widetilde \gamma _N^{(k)}$ turns out to be linked to an anti-Wick representation, while $\gamma _N^{(k)}$ is defined via a standard Wick representation. The difference between $\gamma _N^{(k)}$ and $ \widetilde \gamma _N^{(k)}$ can then be computed by comparing Wick and anti-Wick representation, that is, by looking at the difference between normal ordered and anti-normal ordered polynomials in annihilation and creation operators. 

Recall that for every $f_{k}\in \gH$, we can define the creation operator $a^*(f_{k}): \gH^{k-1} \to \gH^{k}$ by
$$
{a^*}({f_{k}})\left( {\sum\limits_{\sigma  \in {S_{k-1}}} {{f_{\sigma (1)}} } \otimes ... \otimes {f_{\sigma (k-1)}}} \right) = (k) ^{-1/2} \sum\limits_{\sigma  \in {S_{k}}} {{f_{\sigma (1)}} }  \otimes ... \otimes {f_{\sigma (k)}}
$$
The annihilation operator $a(f): \gH^{k+1} \to \gH^{k}$ is the adjoint of $a^*(f)$, given by
$$ a(f) \left( {\sum\limits_{\sigma  \in {S_{k+1}}} {{f_{\sigma (1)}} } \otimes ... \otimes {f_{\sigma (k+1)}}} \right) = (k+1) ^{1/2} \sum\limits_{\sigma  \in {S_{k+1}}} \left\langle f,f_{\sigma(1)} \right\rangle {{f_{\sigma (2)}} }  \otimes ... \otimes {f_{\sigma (k)}}$$
for all $f,f_1,...,f_{k}$ in $\gH$. These operators satisfy the {\it canonical commutation relations}
\begin{equation}\label{eq:CCR}
[a(f),a(g)]=0,\quad[a^*(f),a^*(g)]=0,\quad [a(f),a^*(g)]= \langle f,g \rangle_{\gH}. 
\end{equation}

\medskip

We shall need two lemmas. The first one says that any bosonic $k$-body density matrix can be completely determined by its expectation against Hartree states $u^{\otimes k}$. 

\begin{lemma}[\textbf{Expectations in Hartree vectors determine the state}]\label{le:uk-g-uk=0}\mbox{} \\
If a trace class self-adjoint operator $\gamma^{(k)}$ on $\gH^k$ satisfies  
\bq \label{eq:uk-g-uk=0}
 \langle u^{\otimes k}, \gamma^{(k)} u^{\otimes k} \rangle =0\qquad \text{for ~all}~u\in \gH,
 \eq
 then $\gamma^{(k)} \equiv 0$.
\end{lemma}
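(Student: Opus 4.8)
The plan is to reduce the statement to the vanishing of all the ``off-diagonal'' matrix elements $\pscal{u^{\otimes k}, \gamma^{(k)} v^{\otimes k}}$ and then to extract these from the diagonal hypothesis \eqref{eq:uk-g-uk=0} by a polarization argument. First I would recall that the pure tensor powers $\{u^{\otimes k}:u\in\gH\}$ span a dense subspace of the symmetric space $\gH^k$: by the polarization identity every symmetric product $u_1\otimes_s\cdots\otimes_s u_k$ is a finite linear combination of vectors of the form $w^{\otimes k}$, and such symmetric products are total in $\gH^k$. Since $\gamma^{(k)}$ is trace class, hence bounded, it therefore suffices to prove that $\pscal{u^{\otimes k}, \gamma^{(k)} v^{\otimes k}}=0$ for every pair $u,v\in\gH$.

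To obtain this, fix $u,v\in\gH$ and apply the hypothesis \eqref{eq:uk-g-uk=0} to the vector $a=u+tv\in\gH$, for an arbitrary complex parameter $t\in\C$. This gives
$$\pscal{(u+tv)^{\otimes k}, \gamma^{(k)} (u+tv)^{\otimes k}}=0\qquad\text{for all } t\in\C.$$
Expanding the ket as $(u+tv)^{\otimes k}=\sum_{\ell=0}^{k} t^\ell \binom{k}{\ell}\, u^{\otimes(k-\ell)}\otimes_s v^{\otimes\ell}$, and the bra with $\bar t$ in place of $t$ (by antilinearity of the inner product in its first slot), the left-hand side becomes a polynomial in the two independent variables $t$ and $\bar t$, of the form $\sum_{\ell,\ell'} \bar t^{\,\ell'} t^{\ell}\, c_{\ell'\ell}$ with each $c_{\ell'\ell}$ proportional to $\pscal{u^{\otimes(k-\ell')}\otimes_s v^{\otimes\ell'}, \gamma^{(k)}\, u^{\otimes(k-\ell)}\otimes_s v^{\otimes\ell}}$.

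Next I would invoke the linear independence of the monomials $\{\bar t^{\,\ell'} t^{\ell}\}_{0\le\ell,\ell'\le k}$ as functions on $\C$ — for instance by writing $t=re^{i\theta}$, integrating against $e^{-in\theta}$ over the circle, and comparing powers of $r$ — to conclude that every coefficient $c_{\ell'\ell}$ vanishes. In particular the coefficient of $t^{k}$ (with $\bar t^{0}$), which equals exactly $\pscal{u^{\otimes k}, \gamma^{(k)} v^{\otimes k}}$, must be zero. Since $u,v\in\gH$ were arbitrary, combined with the density step this forces $\gamma^{(k)}\equiv 0$.

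The only genuinely delicate point is the passage from the diagonal hypothesis to the off-diagonal matrix elements, i.e.\ the polarization step; everything else is bookkeeping. The main thing to verify carefully is that no accidental cancellation occurs when reading off the $t^{k}\bar t^{0}$ coefficient, but since that coefficient receives a contribution only from the term carrying the full power $v^{\otimes k}$ in the ket and $u^{\otimes k}$ in the bra, it is unambiguously $\pscal{u^{\otimes k}, \gamma^{(k)} v^{\otimes k}}$, so no cancellation is possible. I note that self-adjointness of $\gamma^{(k)}$ is in fact not needed: the argument works for any bounded $\gamma^{(k)}$ satisfying \eqref{eq:uk-g-uk=0}.
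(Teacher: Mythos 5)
Your proof is correct, and it takes a genuinely different route through the same polarization principle. The paper iterates a differentiate-then-polarize step $k$ times: replacing $u$ by $u+tv$ and differentiating in $t$, then substituting $v = v_1 \pm \widetilde v_1$ and $v_1 \pm i \widetilde v_1$, it converts the diagonal hypothesis slot by slot into the vanishing of all matrix elements $\pscal{v_1\otimes_s \cdots \otimes_s v_k,\, \gamma^{(k)}\, \widetilde v_1 \otimes_s \cdots \otimes_s \widetilde v_k}$, and then concludes from the completeness of symmetric products in $\gH^k$. You instead do a single global polarization: expand $\pscal{(u+tv)^{\otimes k}, \gamma^{(k)} (u+tv)^{\otimes k}} = 0$ in the monomials $\bar t^{\,\ell'} t^{\ell}$, use their linear independence (your circle-average argument is sound) to extract the corner coefficient $\pscal{u^{\otimes k}, \gamma^{(k)} v^{\otimes k}} = 0$, and then delegate the rest to the standard polarization identity showing that the tensor powers $\{w^{\otimes k}\}$ already span the symmetric products --- which is in effect the same identity the paper re-derives by hand through its $k$-fold iteration. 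Your version is shorter and cleanly isolates the one analytic input (independence of $t^\ell \bar t^{\,\ell'}$ as functions on $\C$), at the price of quoting the span-of-powers fact as a separate ingredient; the paper's iteration avoids invoking that fact but carries more bookkeeping. Two small points in your favor: your observation that the $t^k \bar t^{\,0}$ coefficient receives exactly one contribution is right, and it also disposes of any worry about normalization conventions in $\otimes_s$, since only the interior coefficients are affected; and your remark that self-adjointness is unnecessary is accurate --- boundedness suffices, which is equally true of the paper's argument, where the trace-class hypothesis enters only through boundedness.
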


In connection with the discussion in Section \ref{sec:motiv}, this result says that the state is uniquely determined by its lower symbol. This is a well-known fact even in  more abstract settings \cite{Klauder-64,Simon-80}. For the reader's convenience, a standard proof is given in Appendix \ref{app:Hartree-determine}. 

\medskip

In the second lemma, we compare normal and anti-normal ordering of creation and annihilation operators. Thanks to Lemma \ref{le:uk-g-uk=0} we need only do this for a single mode~$v\in \gH$.

\begin{lemma}[\textbf{Wick versus anti-Wick representations}]\label{le:Wick A Wick}\mbox{}\\
Let $v\in S\gH$ 
with associated creation and annihilation operators $a ^* (v)$ and $a(v)$. Then 
\begin{equation}\label{eq:Wick A Wick}
a(v) ^n a ^*(v) ^n = \sum_{k=0} ^n \binom{n}{k} \frac{n!}{k!} a ^*(v) ^k a (v) ^k \mbox{ for any } n\in \NN. 
\end{equation}
\end{lemma}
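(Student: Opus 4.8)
The plan is to reduce the statement to a single-mode identity in the CCR algebra and then argue by induction on $n$. Since $v\in S\gH$ is normalized, the canonical commutation relations \eqref{eq:CCR} specialize to $[a(v),a^*(v)]=\langle v,v\rangle=1$, so writing $a:=a(v)$ and $a^*:=a^*(v)$ I am left to prove the purely algebraic identity $a^n(a^*)^n=\sum_{k=0}^n\binom{n}{k}\frac{n!}{k!}(a^*)^k a^k$ for the standard pair of bosonic operators satisfying $[a,a^*]=1$. The cases $n=0$ and $n=1$ follow at once from $[a,a^*]=1$.

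First I would record the auxiliary commutators $[a,(a^*)^k]=k\,(a^*)^{k-1}$ and $[a^k,a^*]=k\,a^{k-1}$, each proved by a one-line induction on $k$ from the Leibniz rule $[a,XY]=[a,X]Y+X[a,Y]$ together with $[a,a^*]=1$. The engine of the induction on $n$ is the observation that $a^{n+1}(a^*)^{n+1}=a\,\bigl(a^n(a^*)^n\bigr)\,a^*$. Assuming the formula at rank $n$, I substitute it into the right-hand side and, for each summand $(a^*)^k a^k$, commute the outer $a$ to the right past $(a^*)^k$ and the outer $a^*$ to the left past $a^k$ using the auxiliary commutators. A short computation turns $a\,(a^*)^k a^k\,a^*$ into $(a^*)^{k+1}a^{k+1}+(2k+1)(a^*)^k a^k+k^2(a^*)^{k-1}a^{k-1}$, so the rank-$n$ expression is recast as a normal-ordered sum whose coefficient of $(a^*)^j a^j$ receives contributions from the three neighbouring indices $k=j-1,j,j+1$.

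It then remains to check that this coefficient reproduces the rank-$(n+1)$ one. Writing $c^{(n)}_k:=\binom{n}{k}\frac{n!}{k!}$, the required identity is the recurrence $c^{(n)}_{j-1}+(2j+1)\,c^{(n)}_j+(j+1)^2\,c^{(n)}_{j+1}=c^{(n+1)}_j$. After factoring out $(n!)^2/(j!)^2$ and clearing the denominator $(n-j+1)!$, the three numerators add up to $j^2+(2j+1)(n-j+1)+(n-j+1)(n-j)=(n+1)^2$, which is precisely the factor needed to pass from $c^{(n)}$ to $c^{(n+1)}$. This elementary but slightly fiddly bookkeeping is the only genuine obstacle; everything else is forced by the CCR.

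Alternatively, I could bypass the induction by testing both sides on the orthonormal number states $|m\rangle$ of the mode $v$ (equivalently, in the spirit of Lemma \ref{le:uk-g-uk=0}, on Hartree vectors), using $a^k|m\rangle=\sqrt{m!/(m-k)!}\,|m-k\rangle$ and $(a^*)^n|m\rangle=\sqrt{(m+n)!/m!}\,|m+n\rangle$. Both sides then act diagonally, and the identity collapses to the scalar relation $\frac{(m+n)!}{m!}=\sum_{k=0}^n\binom{n}{k}\frac{n!}{k!}\frac{m!}{(m-k)!}$ for every integer $m\ge 0$, i.e. the expansion of a rising factorial into falling factorials (the unsigned Lah connection), itself a one-variable polynomial identity provable by induction or Vandermonde. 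I expect the direct CCR induction to be the cleaner write-up in the present second-quantized setting.
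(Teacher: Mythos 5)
Your proof is correct, and while it shares the paper's overall strategy---reduce to a single normalized mode so that $[a,a^*]=1$ and induct on $n$---the induction engine is genuinely different. The paper identifies the coefficients $c_{n,k}$ with those of the modified Laguerre polynomial $\tilde L_n(x)=n!\,L_n(-x)$ and runs the induction through the nontrivial sandwich identity \eqref{eq:induction}, namely $a^{n+1}(a^*)^{n+1}=a^*\,a^n(a^*)^n\,a+(2n+1)\,a^n(a^*)^n-n^2\,a^{n-1}(a^*)^{n-1}$, whose proof itself needs the commutators \eqref{eq:CCR n}; the combinatorial bookkeeping is then outsourced to the classical three-term Laguerre recurrence. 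You instead start from the \emph{trivial} factorization $a^{n+1}(a^*)^{n+1}=a\,\bigl(a^n(a^*)^n\bigr)\,a^*$, normal-order termwise via $a\,(a^*)^k a^k\,a^*=(a^*)^{k+1}a^{k+1}+(2k+1)(a^*)^k a^k+k^2(a^*)^{k-1}a^{k-1}$ (this is correct, and is essentially the mirror image of the paper's \eqref{eq:induction} applied inside the sum), and verify the coefficient recurrence $c^{(n)}_{j-1}+(2j+1)c^{(n)}_j+(j+1)^2c^{(n)}_{j+1}=c^{(n+1)}_j$ by hand; I checked the arithmetic $j^2+(2j+1)(n-j+1)+(n-j+1)(n-j)=(n+1)^2$, which is right, including the boundary cases $j=0$ and $j=n+1$. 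What your route buys is self-containedness---no Laguerre input at all---at the cost of explicit coefficient arithmetic; what the paper's route buys is making the Laguerre structure of the Wick/anti-Wick transition visible. Your alternative via number states is different from anything in the paper and arguably the shortest complete argument: both sides act diagonally on $|m\rangle$, and the identity collapses to $\frac{(m+n)!}{m!}=\sum_{k=0}^n\binom{n}{k}\frac{n!}{k!}\frac{m!}{(m-k)!}$, which after division by $n!$ is Vandermonde, $\binom{m+n}{n}=\sum_k\binom{n}{k}\binom{m}{k}$. If you write that version up, do state explicitly the factorization $\cF\simeq\cF_v\otimes\cF_{v^\perp}$ under which both sides act as a polynomial in $a_v,a_v^*$ tensored with the identity, so that testing on the number states of the single mode $v$ indeed determines the operators on all of $\gH^m$; with that one sentence added, either of your two arguments is a complete proof.
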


\begin{proof}
Recall that the $n$-th Laguerre polynomial is given by   
\[
L_n (x) = \sum_{k=0} ^n \binom{n}{k} \frac{(-1) ^k}{k!} x ^k
\]
and these polynomials satisfy the relation
\[
(n+1) L_{n+1} (x) = (2n +1) L_n (x) - x L_n (x) - n L_{n-1} (x).  
\]
The identity \eqref{eq:Wick A Wick} is equivalent to 
\begin{equation}\label{eq:Wick A Wick proof}
a(v) ^n a ^*(v) ^n = \sum_{k=0} ^n c_{n,k} \, a^*(v) ^k a (v) ^k 
\end{equation}
where the $c_{n,k}$'s are the coefficients of the polynomial
$$\tilde{L}_n (x) := n! \, L_n (-x).$$

We prove \eqref{eq:Wick A Wick proof} by induction on $n$. Note first that the CCR \eqref{eq:CCR} immediately gives \eqref{eq:Wick A Wick} for $n=1$, while it is easy to see that 
\begin{equation}\label{eq:Wick A Wick 2}
a(v) ^2 a ^*(v) ^2 = a ^*(v) ^2 a (v) ^2 + 4 a ^* (v) a (v) + 2 
\end{equation}
by a repeated use of the CCR. This is \eqref{eq:Wick A Wick} for $n=2$, so we simply need an induction formula giving $a(v) ^{n+1} a ^*(v) ^{n+1}$ as a function of $a(v) ^n a ^*(v) ^n$ and $a(v) ^{n-1} a ^*(v) ^{n-1}$. We claim that 
\begin{equation}\label{eq:induction}
a(v) ^{n+1} a ^*(v) ^{n+1} = a^* (v)  a(v) ^n a ^*(v) ^n  a(v) + (2n+1) a(v) ^n a ^*(v) ^n - n ^2  a(v) ^{n-1} a ^*(v) ^{n-1}.
\end{equation}
Note the order of creation and annihilation operators in the first term: knowing a normal ordered representation of $a(v) ^n a ^*(v) ^n$ we can deduce a normal ordered representation for this term. Since the modified Laguerre polynomials satisfy
\[
\tilde{L}_{n+1} (x) = (2n +1) \tilde{L}_n (x) + x \tilde{L}_n (x) - n ^2 \tilde{L}_{n-1} (x) 
\]
it is then clear that \eqref{eq:Wick A Wick proof} follows from \eqref{eq:induction}.

We now prove our claim \eqref{eq:Wick A Wick proof}. We use the relations 
\begin{align}\label{eq:CCR n}
a(v) a ^* (v) ^n &= a ^* (v) ^n a(v) + n  a ^* (v) ^{n-1} \nonumber \\
a(v) ^n a ^* (v) &= a ^* (v)  a(v) ^n + n  a (v) ^{n-1}
\end{align}
that follow from the CCR. Then 
\begin{align*}
a^* (v) a(v) ^n a ^*(v) ^n  a(v) &= a(v) ^n a ^* (v) ^{n+1} a (v) - n a (v) ^{n-1} a ^* (v) ^n a(v)\\
& = a(v) ^{n+1} a ^*(v) ^{n+1} - (n + 1) a(v) ^n a ^*(v) ^n \\
&- n a(v) ^n a ^*(v) ^n +  n ^2  a(v) ^{n-1} a ^*(v) ^{n-1},
\end{align*}
which is \eqref{eq:Wick A Wick proof}.
\end{proof}


Now we are able to give the
\begin{proof}[Proof of Theorem \ref{thm:CKMR-identity}]
Clearly it is sufficient to consider only the case of a pure state $|\Psi_N \rangle \langle \Psi_N |$. Using Lemma \ref{le:uk-g-uk=0}, the $k$-particle density matrix of $\Psi_N$ is uniquely defined by
\bqq 
 \langle v^{\otimes k}, \gamma_N^{(k)} v^{\otimes k} \rangle =\frac{(N-k)!}{N!} \langle \Psi_N, a^*(v)^k a(v)^k \Psi_N \rangle
 \eqq
for all $v\in \gH$ such that $\norm{v}=1$. In contrast, the $ \widetilde \gamma _N^{(k)}$ satisfies a similar formula but with the order of the creation and annihilation operators reversed:
\begin{align*}
\langle v^{\otimes k}, \widetilde \gamma _N^{(k)} v^{\otimes k} \rangle &= \dim \gH^N \int_{S\gH} du |\langle u^{\otimes N}, \Psi_N \rangle|^2 | \langle  u^{\otimes k},v^{\otimes k} \rangle|^2 \\
&= \dim \gH^N \int_{S\gH} du |\langle u^{\otimes (N +k)}, v^{\otimes k} \otimes \Psi_N \rangle|^2 \\
&= \frac{N!}{(N+k)!} \dim \gH^N \int_{S\gH} du |\langle u^{\otimes (N +k)}, a^*(v)^{k} \Psi_N \rangle|^2 \\
&=  \frac{N!}{(N+k)!}\frac{\dim \gH^N}{\dim \gH^{N+k}} \| a^*(v)^{k} \Psi_N \|^2 \\
&= \frac{(N+d-1)!}{(N+k+d-1)!} \langle \Psi_N, a(v)^k a^*(v)^{k} \Psi_N \rangle
\end{align*}
where we used Schur's formula \eqref{eq:Schur} for the fourth equality. There only remains to use Lemma \ref{le:Wick A Wick}:
\begin{align*}
\frac{(N+k+d-1)!}{(N+d-1)!} \langle v^{\otimes k}, \widetilde \gamma _N^{(k)} v^{\otimes k} \rangle &= \langle \Psi_N, a(v)^k a^*(v)^{k} \Psi_N \rangle \\
&= \sum_{\ell=0}^{k}  \binom{k}{\ell} \frac{k!}{\ell!} \langle \Psi_N,  a^*(v)^\ell a(v)^\ell \Psi_N \rangle\\
& = \sum_{\ell=0}^{k} {N \choose \ell} \binom{k}{\ell} k! \langle v^{\otimes \ell},   \gamma_N^{(\ell)} v^{\otimes \ell}\rangle
 \end{align*}
and \eqref{eq:CKMR exact} then follows from Lemma \ref{le:uk-g-uk=0}.
\end{proof}

\appendix

\section{Expectations in Hartree vectors determine the state} \label{app:Hartree-determine}
In the following we use the symmetric tensor product 
$$\Psi_k\otimes_s\Psi_\ell(x_1,...,x_{k})=\frac{1}{\sqrt{\ell!(k-\ell)!k!}}\sum_{\sigma\in {S}_{k}}\Psi_\ell(x_{\sigma(1)},...,x_{\sigma(\ell)})\Psi_{k-\ell}(x_{\sigma(\ell+1)},...,x_{\sigma(k)})
$$
of two functions $\Psi_\ell \in \gH^{\ell}$ and $\Psi_{k-\ell}\in\gH^{k-\ell}$. Note that for every $f\in \gH$,
$$
f\otimes_s \Psi_\ell = a^*(f) \Psi_\ell.
$$
\begin{proof}[Proof of Lemma \ref{le:uk-g-uk=0}] By replacing $u$ by $u+tv$ in (\ref{eq:uk-g-uk=0}) and taking the derivative with respect to $t$, we obtain
 \bqq
 \langle v \otimes_s u^{\otimes (k-1)}, \gamma^{(k)} v \otimes_s u^{\otimes (k-1)} \rangle =0
 \eqq
 for all $u,v\in \gH$. Taking $v$ in the form $v=v_1\pm \widetilde v_1$ and then $v = v_1 \pm i \widetilde v_1$ we deduce
  \bqq
\langle v_1 \otimes_s u^{\otimes (k-1)}, \gamma^{(k)}  \widetilde v_1 \otimes_s u^{\otimes (k-1)} \rangle =0
 \eqq
 for all $u,v_1,\widetilde v_1\in \gH$. We may then again replace $u$ by $u+tv$ in the above, and take the derivative with respect to $t$. Repeating this process $k$ times we conclude that 
   \bqq
\langle v_1 \otimes_s v_2 \otimes_s \ldots \otimes_s v_k, \gamma^{(k)} \widetilde v_1 \otimes_s \widetilde v_2 \otimes_s \ldots \otimes_s \widetilde v_k \rangle =0
 \eqq
for all $v_j, \widetilde v_j \in \gH$. Since vectors of the forms  $v_1 \otimes_s v_2 \otimes_s \ldots \otimes_s v_k$
form a complete basis for $\gH^k$, we conclude that $\gamma^{(k)}\equiv 0$.
\end{proof}

%




\end{document}